\newtheorem{lemma}{Lemma}
\newtheorem{theorem}{Theorem}
\newtheorem{claim}{Claim}
\newcounter{claimb}
\def\claimb{$$\vcenter\bgroup\advance\hsize by -8em\noindent
\refstepcounter{claimb}\ignorespaces\it}        
\def\endclaimb{\rm\egroup\leqno(\theclaimb)$$\global\@ignoretrue}
\newenvironment{proofclaim}[1][]%
    {\noindent \emph{Proof.} {}{#1}{}}{\hfill
    $\Diamond$\vspace{1em}}
\begin{document}

\title{Planar graphs with $\Delta\geq 7$ and no triangle adjacent to a $C_4$\\are minimally edge- and total-choosable.\thanks{Work supported by the ANR Grant EGOS (2012-2015) 12 JS02 002 01}}

\author{Marthe Bonamy, Benjamin Lévêque, Alexandre Pinlou\thanks{Second affiliation: D\'epartement MIAp, Universit\'e Paul-Val\'ery, Montpellier 3}\\ \normalsize{LIRMM, Universit\'e Montpellier 2, CNRS}\\ \small{\{marthe.bonamy, benjamin.leveque, alexandre.pinlou\}@lirmm.fr}}

\maketitle

\begin{abstract}
For planar graphs, we consider the problems of \emph{list edge coloring} and \emph{list total coloring}. Edge coloring is the problem of coloring the edges while ensuring that two edges that are adjacent receive different colors. Total coloring is the problem of coloring the edges and the vertices while ensuring that two edges that are adjacent, two vertices that are adjacent, or a vertex and an edge that are incident receive different colors. In their list extensions, instead of having the same set of colors for the whole graph, every vertex or edge is assigned some set of colors and has to be colored from it. A graph is minimally edge or total choosable if it is list edge $\Delta$-colorable or list total $(\Delta+1)$-colorable, respectively, where $\Delta$ is the maximum degree in the graph.

It is already known that planar graphs with $\Delta\geq 8$ and no triangle adjacent to a $C_4$ are minimally edge and total choosable (Li Xu 2011), and that planar graphs with $\Delta\geq 7$ and no triangle sharing a vertex with a $C_4$ or no triangle adjacent to a $C_k$ ($\forall 3 \leq k \leq 6$) are minimally total colorable (Wang Wu 2011). We strengthen here these results and prove that planar graphs with $\Delta\geq 7$ and no triangle adjacent to a $C_4$ are minimally edge and total choosable.
\end{abstract}

\section{Introduction}

We consider simple, connected graphs. An \emph{edge $k$-coloring} of a graph $G$ is a coloring of the edges of $G$ with $k$ colors such that two edges that are adjacent receive distinct colors. We define $\chi'(G)$ as the smallest $k$ such that $G$ admits an edge $k$-coloring. An extension of the problem of edge coloring is the \emph{list edge coloring}, defined as follows. For any list assignment $L:E\rightarrow \mathcal{P}(\mathbb{N})$, a graph $G=(V,E)$ is edge $L$-colorable if there exists an edge coloring of $G$ such that the color of every edge $(u,v)\in E$ belongs to $L(u,v)$. A graph $G=(V,E)$ is said to be \emph{list edge $k$-colorable} (or \emph{edge $k$-choosable}) if $G$ is edge $L$-colorable for any list assignment $L$ such that $|L(u,v)|\geq k$ for any edge $(u,v)\in E$. We define $\chi'_\ell(G)$ as the smallest $k$ such that $G$ is edge $k$-choosable. A \emph{total $k$-coloring} of a graph $G=(V,E)$ is a coloring of its edges and vertices with $k$ colors such that two elements of $V \cup E$ that are adjacent or incident receive distinct colors. The definitions naturally extend to \emph{list total coloring}, and $\chi''_\ell(G)$.\\

Note that for any graph $G$, we have $\chi'_\ell(G)\geq\chi'(G)\geq\Delta(G)$ and $\chi''_\ell(G)\geq\chi''(G)\geq\Delta(G)+1$, where $\Delta(G)$ denotes the maximum degree of a vertex in $G$. The first parts of the two inequalities are both conjectured to be equalities (List Coloring Conjecture, and~\cite{bkw97}).

Sufficient conditions for all these inequalities to be actually equalities have been extensively studied. We say that a graph is minimally \emph{edge} or \emph{total} \emph{choosable} if it is edge $\Delta(G)$-choosable or total $(\Delta(G)+1)$-choosable, respectively. For the planar graph case, the best known result is that when $\Delta(G)\geq 12$, a planar graph $G$ is minimally edge- and total-choosable~\cite{bkw97}. For $k \geq 3$, a cycle of length $k$ (resp. at most $k$) is denoted $C_k$ (resp. $C_{k^-}$). Two cycles are said to be \emph{incident} if they share at least one vertex, and \emph{adjacent} if they share at least one edge. When adding restrictions on the cycles, it is known for example that planar graphs with $\Delta(G)\geq 7$ and no $C_4$~\cite{hlc06} or no two adjacent $C_{4^-}$~\cite{lmw13}, or $\Delta(G)\geq 8$ and no triangle adjacent to a $C_4$~\cite{lx11} are minimally edge- and total-choosable. Regarding total coloring only, it is known that planar graphs with $\Delta(G)\geq 7$ and no triangle incident to a $C_4$~\cite{ww11} or no triangle adjacent to a $C_k$ ($k \in \{3,4,5,6\}$)~\cite{ww11} are minimally total-choosable.

Here we strengthen these results by proving that:
\begin{theorem}\label{thm:main}
Every planar graph with $\Delta(G) \geq 7$ and no triangle adjacent to a $C_4$ satisfies $\chi'_\ell(G)=\Delta(G)$ and $\chi''_\ell(G)=\Delta(G)+1$.
\end{theorem}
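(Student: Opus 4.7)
The plan is to argue by contradiction, taking a hypothetical counterexample $G$ to Theorem~\ref{thm:main} that minimizes $|V(G)|+|E(G)|$ among planar graphs with $\Delta(G)\le 7$ (we may assume $\Delta(G)=7$, since higher-$\Delta$ cases are covered by \cite{bkw97,lx11}) containing no triangle adjacent to a $C_4$. The argument will proceed in three stages, in the tradition of \cite{bkw97,hlc06,lmw13,lx11,ww11}: first establish structural (reducible) configurations that cannot appear in $G$, then apply a discharging argument on a planar embedding to reach a contradiction with Euler's formula. Since the conditions "$\chi'_\ell(G)=\Delta$" and "$\chi''_\ell(G)=\Delta+1$" both amount to extending a precoloring of a subgraph by one element whose list size exceeds the number of constraints on it, I intend to prove them in parallel, bundling both statements into each reducibility lemma (every configuration reducible for edge choosability will be shown simultaneously reducible for total choosability).

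In the structural stage, I would first observe that $G$ has minimum degree at least $2$ (a degree-$1$ vertex is trivially reducible) and, more strongly, rule out various light substructures. Typical reducible configurations to establish include: no edge joining two vertices of degree $\le 3$; every $2$-vertex has two neighbors of degree $7$; every $3$-vertex has at least two neighbors of degree $\ge 6$; a $4$-vertex lying on a triangle forces large degrees on the other vertices of that triangle; certain chains of small-degree vertices on a face are impossible. Each such lemma is proved by removing a well-chosen element (or small subgraph), invoking minimality to obtain a list coloring, counting the forbidden colors on the uncolored elements, and applying Hall's theorem or a direct greedy argument with a carefully ordered uncoloring. The hypothesis that no triangle is adjacent to a $C_4$ is crucial here because it forces the faces around a triangle to be large ($\ge 5$), giving extra room in the discharging phase and limiting the interaction between triangles.

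For the discharging stage, I would assign the standard initial charge $\mu(x)=d(x)-4$ to each vertex $x$ and $\mu(f)=\ell(f)-4$ to each face $f$, so that $\sum_x \mu(x)+\sum_f \mu(f)=-8$ by Euler's formula. I would then design discharging rules transferring charge from faces of length $\ge 5$ and from high-degree vertices (degrees $6$ and $7$) to triangles and to vertices of degree $2$, $3$, $4$. Typical rules would read: every $7$-vertex sends $\tfrac{1}{3}$ (say) to each incident $2$-vertex through an incident edge; every face of length $\ge 5$ gives a controlled amount to each incident triangle via shared vertices; a triangle receives enough from its three incident faces (all of length $\ge 5$) to reach non-negative charge. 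The goal is to show that after discharging, every vertex and every face has non-negative charge, contradicting the negative total.

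The main obstacle will be the tightness caused by $\Delta=7$: compared with the $\Delta\ge 8$ result of Li--Xu \cite{lx11}, every vertex has one less unit of initial charge to spare, so the set of reducible configurations must be richer and more delicate, and the discharging rules must be tuned so that a $7$-vertex incident to many low-degree neighbors or many triangles still ends up with non-negative charge. I expect the most intricate case-analysis to be around $4$- and $5$-vertices incident to triangles: the non-adjacency of triangles and $4$-cycles gives enough separation between triangles, but one must carefully track how much charge a face of length $5$ can give away when it is incident to a triangle through a shared vertex (as opposed to a shared edge, which is forbidden). Matching the reducible configurations to the discharging rules so that every remaining local configuration produces non-negative final charge is the combinatorial heart of the proof.
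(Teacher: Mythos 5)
Your proposal is a plan for a proof, not a proof: it names the method (minimal counterexample, reducible configurations, discharging against Euler's formula), which is indeed the method of the paper, but essentially all of the mathematical content is deferred. You write that you ``would'' establish ``typical'' reducible configurations and ``would design'' rules so that every charge comes out non-negative, and you yourself identify the matching of configurations to rules as ``the combinatorial heart of the proof'' --- which is precisely the part that is absent. Concretely, three things are missing. First, the actual reducible configurations: beyond the easy degree conditions (which do follow from your edge-sum argument), the paper needs arbitrarily long alternating chains and cycles of $2$- and $3$-vertices around a common high-degree vertex (its \textbf{($C_4$)}, \textbf{($C_5$)}, \textbf{($C_6$)}). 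Their reducibility is \emph{not} a greedy or Hall-type count: it requires controlling which lists remain equal after partial coloring (a two-case analysis on whether two lists coincide, using $2$-choosability of even cycles) and, for the cycle configuration, a separate list-edge-coloring lemma for $K_{2,3}$ with mixed list sizes $3$ and $2$ and one non-equality constraint. Nothing in your sketch produces these. Second, your discharging rules are purely local (faces and high-degree vertices give to nearby small-degree elements), but at $\Delta=7$ that is not enough: in your normalization $\mu(f)=\ell(f)-4$ a $4$-face has zero charge, so a $2$-vertex lying on two $4$-faces can receive nothing from its faces and must be paid entirely by vertices, and one has to bound how many such $2$-vertices a single $7$-vertex can be charged for. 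The paper resolves exactly this tension with two non-local devices --- a ``support'' relation that sends charge along a fan of consecutive $4$-faces, and a global ``common pot'' fed by all degree-$k$ vertices --- whose validity rests on the forbidden chain/cycle configurations above. Third, your claim that every configuration reducible for edge-choosability is ``simultaneously reducible'' for total-choosability is true only when every element to be recolored is incident to a vertex of degree at most $k/2$; the paper's configuration \textbf{($C_7$)} (a $4$-vertex with two $4$-neighbors) is a genuine exception requiring a separate total-coloring argument, so the parallel treatment cannot be asserted wholesale.

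A smaller point on the setup: reducing to $\Delta(G)=7$ via the known results for $\Delta\ge 8$ is legitimate, but a minimal counterexample must be phrased as the paper does --- fix $k=\Delta(\Gamma)$ and demand $k$-edge- and $(k+1)$-total-choosability of all proper subgraphs \emph{even when their maximum degree drops below $k$} --- otherwise minimality gives you colorings with too few colors to extend. Your formulation (``minimizes $|V(G)|+|E(G)|$ among planar graphs with $\Delta(G)\le 7$'') is repairable but as written conflates the parameter of the coloring with the maximum degree of the subgraph.
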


In Sections~\ref{sect:meth} and \ref{sect:def}, we introduce the method and terminology. In Sections~\ref{sect:conf} and~\ref{sect:dis}, we prove in two steps Theorem~\ref{thm:main}, with the discharging methods described in Section~\ref{sect:def}. \newline

\section{Method}\label{sect:meth}

The discharging method was introduced in the beginning of the 20$^{th}$ century. It has been used to prove the celebrated Four Color Theorem (\cite{ah77} and \cite{ahk77}).

We prove Theorem~\ref{thm:main} using a discharging method, as follows. A graph is \emph{minimal} for a property if it satisfies this property but none of its proper subgraphs does. The first step is to set an integer $k \geq 7$ and consider a minimal counter-example $G$ (i.e. a graph $G$ such that $\Delta(G) \leq k$ and $\chi'_\ell(G) > k$ or $\chi''_\ell(G)>k+1$, whose every proper subgraph is $k$-edge-choosable and $(k+1)$-total-choosable), and prove it cannot contain some configurations. To that purpose, we assume by contradiction that $G$ contains one of the configurations. We consider a particular subgraph $H$ of $G$. For any list assignment $L$ on the edges of $G$, with $|L(e)|\geq k$ for every edge $e$, we $L$-edge-color $H$ by minimality. We show how to extend the $L$-edge-coloring of $H$ to $G$, a contradiction. We argue that, except in a well-specified case, the same proof works for $L$-total-coloring with any list assignment $L$ on the edges and vertices of $G$, with $|L(e)|\geq k+1$ and $|L(v)|\geq k+1$ for every edge $e$ and vertex $v$.

The second step is to prove that a connected planar graph on at least two vertices with $\Delta \leq k$ that does not contain any of these configurations nor a triangle adjacent to a $C_4$ does not satisfy Euler's Formula. To that purpose, we consider a planar embedding of the graph. We assign to each vertex its degree minus six as a weight, and to each face two times its degree minus six. We apply discharging rules to redistribute weights along the graph with conservation of the total weight. As some configurations are forbidden, we can prove that after application of the discharging rules, every vertex and every face has a non-negative final weight. This implies that $\sum_v(d(v)-6)+\sum_f(2d(f)-6) = 2\times |E(G)|- 6\times |V(G)| + 4 \times |E(G)| - 6 \times |F(G)| \geq 0$, a contradiction with Euler's Formula that $|E| - |V| - |F|=-2$. Hence a minimal counter-example cannot exist.

\section{Terminology}\label{sect:def}
Let $k\geq 7$.\\

In the figures, we draw in black a vertex that has no other neighbor than the ones already represented, in white a vertex that might have other neighbors than the ones represented. When there is a label inside a white vertex, it is an indication on the number of neighbors it has. The label '$i$' means "exactly $i$ neighbors", the label '$i^+$' (resp. '$i^-$') means that it has at least (resp. at most) $i$ neighbors. The same goes for faces. Note that the white vertices may coincide with other vertices.\\

Given a plane graph (ie a planar graph with its embedding) and a face $f=(u,v,w,x)$, we say $w$ is the vertex \emph{opposite} to $u$ in $f$. If there is a face $(t,u_1,v,u_2)$ with $d(t)=2$ and $d(v)=3$, we say a neighbor $w$ of $u_1$ is the \emph{$(v,u_1)$-support of $t$} if the sequence $(t,v,v_1,v_2,\ldots,v_{p-1},v_p=w)$ of consecutive neighbors of $u_1$ contains only vertices of degree $3$ except for $t$, and any two consecutive neighbors of the sequence are part of the boundary of a face of degree $4$ that contains $u_1$, while the edge $(u_1,w)$ belongs to a face of degree at least $5$, or to a face of degree $4$ with a vertex of degree at least $4$ opposite to $w$ (see Figure~\ref{fig:parent}). Given $t$, $v$ and $u_1$, at most one vertex can satisfy this property. Note that $v$ can itself be the $(v,u_1)$-support of $t$, and that it can even be also the $(v,u_2)$-support of $t$. Note that, by definition, if $w$ is the $(v,u_1)$-support of $t$, then the edge $(u_1,w)$ is incident, on one side, to either a face of degree at least $5$, or to a face of degree $4$ where the vertex opposite to $w$ is of degree $\geq 4$, and, on the other side, to a face of degree $4$ where the vertex opposite to $w$ is of degree $3$. Consequently, a vertex cannot be support more than twice, as a support vertex is of degree $3$. 

\captionsetup[subfloat]{labelformat=empty}
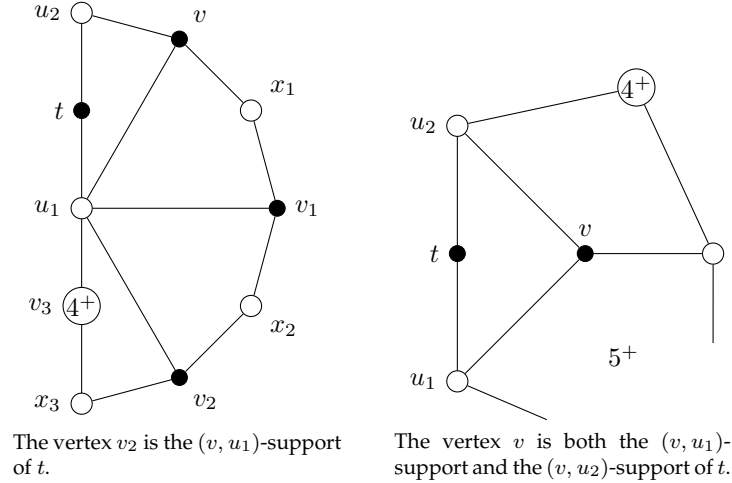
\begin{figure}[!h]
\centering
\subfloat[][The vertex $v_2$ is the $(v,u_1)$-support of $t$.]{
\centering
 \begin{tikzpicture}[scale=1.3]
    \tikzstyle{whitenode}=[draw,circle,fill=white,minimum size=8pt,inner sep=0pt]
    \tikzstyle{blacknode}=[draw,circle,fill=black,minimum size=6pt,inner sep=0pt]
\tikzstyle{tnode}=[draw,ellipse,fill=white,minimum size=8pt,inner sep=0pt]
\tikzstyle{texte} =[fill=white, text=black]

 \draw (2,-2) node[whitenode] (e4) [label=180:$x_3$] {}
-- ++(90:1cm) node[whitenode] (d2) [label=180:$v_3$] {$4^+$}
-- ++(90:1cm) node[whitenode] (c) [label=180:$u_1$] {}
-- ++(90:1cm) node[blacknode] (d1) [label=180:$t$] {}
-- ++(90:1cm) node[whitenode] (e1) [label=180:$u_2$] {};

\draw (3,1.732) node[blacknode] (f1) [label=60:$v$] {};
\draw (3.732,1) node[whitenode] (e2) [label=30:$x_1$] {};
\draw (4,0) node[blacknode] (g) [label=0:$v_1$] {};
\draw (3,-1.732) node[blacknode] (f3) [label=-60:$v_2$] {};
\draw (3.732,-1) node[whitenode] (e3) [label=-30:$x_2$] {};

\draw (e1) edge node {} (f1);
\draw (f1) edge node {} (e2);
\draw (e4) edge node {} (f3);
\draw (g) edge node {} (e2);
\draw (g) edge node {} (c);
\draw (g) edge node {} (e3);
\draw (f3) edge node {} (e3);
\draw (c) edge node {} (f1);
\draw (c) edge node {} (f3);
\end{tikzpicture}
\label{fig:p1}
}
\qquad
\subfloat[][The vertex $v$ is both the $(v,u_1)$-support and the $(v,u_2)$-support of $t$.]{
\centering
 \begin{tikzpicture}[scale=1.7]
    \tikzstyle{whitenode}=[draw,circle,fill=white,minimum size=8pt,inner sep=0pt]
    \tikzstyle{blacknode}=[draw,circle,fill=black,minimum size=6pt,inner sep=0pt]
\tikzstyle{tnode}=[draw,ellipse,fill=white,minimum size=8pt,inner sep=0pt]
\tikzstyle{texte} =[fill=white, text=black]

 \draw (0,0) node[whitenode] (c) [label=180:$u_1$] {}
-- ++(90:1cm) node[blacknode] (d1) [label=180:$t$] {}
-- ++(90:1cm) node[whitenode] (e1) [label=180:$u_2$] {};

\draw (1,1) node[blacknode] (f) [label=90:$v$] {}
-- ++(0:1cm) node[whitenode] (g) {};

\draw (c) edge node {} (f);
\draw (e1) edge node {} (f);

\draw (1.4,2.3) node[whitenode] (h) {$4^+$};
\draw (g) edge node {} (h);
\draw (e1) edge node {} (h);

\draw (g) edge node {} (2,0.3);
\draw (c) edge node {} (0.7,-0.3);
\draw (1.3,0.2) node[texte] (i) {$5^+$};

\end{tikzpicture}
\label{fig:p2}
}
\caption{Examples of supports.}
\label{fig:parent}
\end{figure}
\captionsetup[subfloat]{labelformat=parens}

\section{Forbidden Configurations}\label{sect:conf}

A \emph{constraint} of an element $u \in V \cup E$ is an already colored element of $V \cup E$ that is adjacent or incident to $u$.  \newline

We define configurations \textbf{($C_1$)} to \textbf{($C_7$)} (see Figure~\ref{fig:config}). Configurations \textbf{($C_1$)}, \textbf{($C_4$)} and \textbf{($C_7$)} are standard. Configurations \textbf{($C_2$)} and \textbf{($C_3$)} follow from the theorem statement. Configuration \textbf{($C_5$)} appears in~\cite{cks07}, and we introduce Configuration \textbf{($C_6$)}.

\begin{itemize}
\item \textbf{($C_1$)} is an edge $(u,v)$ with $d(u)+d(v)\leq k+1$ and $d(u) \leq \lfloor \frac{k}{2} \rfloor$. 
\item \textbf{($C_2$)} is a cycle $(u,v,w,x)$ such that $(u,w)$ is a chord. 
\item \textbf{($C_3$)} is a cycle $(u,v,w,x,y)$ such that $(w,y)$ is a chord. 
\item \textbf{($C_4$)} is a cycle $(u_1,v_1,...,u_p,v_p,u_1), p \geq 2$ where $\forall i, d(v_i)=2$. 
\item \textbf{($C_5$)} is a vertex $v_1$ with $d(v_1)=2$ such that, for $u$ and $x_1$ its two neighbors, there is a path $(v_1,x_1,v_2,\ldots,v_p,x_p,v_{p+1})$ ($p \geq 1$) such that $\forall i, v_i$ is adjacent to $u$, with $\forall \  2 \leq i\leq p, d(v_i)=3$, and $d(v_{p+1})=2$.
\item \textbf{($C_6$)} is a vertex $v_1$ with $d(v_1)=2$ such that, for $u$ and $x_1$ its two neighbors, there is a cycle $(x_1,v_2,x_2,\ldots,x_{p-1},v_p)$ such that $\forall i, v_i$ is adjacent to $u$, and $\forall i \geq 2, d(v_i)=3$.
\item \textbf{($C_7$)} is a vertex $u$ with $d(u)=4$ that has at least two neighbors $u_1$ and $u_2$ with $d(u_1)=d(u_2)=4$.
\end{itemize}

\captionsetup[subfloat]{labelformat=empty}
\begin{figure}[!h]
\centering
\subfloat[][\textbf{($C_1$)}]{
\centering
 \begin{tikzpicture}[scale=0.95]
    \tikzstyle{whitenode}=[draw,circle,fill=white,minimum size=8pt,inner sep=0pt]
    \tikzstyle{blacknode}=[draw,circle,fill=black,minimum size=6pt,inner sep=0pt]
\tikzstyle{tnode}=[draw,ellipse,fill=white,minimum size=8pt,inner sep=0pt]
\tikzstyle{texte} =[fill=white, text=black]

 \draw (-1,4.2) node[whitenode] (u) [label=90:$u$] {}
-- ++(0:1cm) node[whitenode] (v) [label=90:$v$] {};

\draw (-1.8,3.4) node[anchor=text] (a) {$d(u)+d(v)\leq k+1$};
\draw (-1.2,2.9) node[anchor=text] (a) {$d(u) \leq \lfloor \frac{k}{2} \rfloor$};
\end{tikzpicture}
\label{fig:cc1}
}
\qquad
\subfloat[][\textbf{($C_2$)}]{
\centering
 \begin{tikzpicture}[scale=0.95]
    \tikzstyle{whitenode}=[draw,circle,fill=white,minimum size=8pt,inner sep=0pt]
    \tikzstyle{blacknode}=[draw,circle,fill=black,minimum size=6pt,inner sep=0pt]
\tikzstyle{tnode}=[draw,ellipse,fill=white,minimum size=8pt,inner sep=0pt]
\tikzstyle{texte} =[fill=white, text=black]

 \draw (2.3,4.5) node[whitenode] (u) [label=left:$u$] {}
-- ++(-90:1cm) node[whitenode] (v) [label=left:$v$] {}
-- ++(0:1cm) node[whitenode] (w) [label=right:$w$] {}
-- ++(90:1cm) node[whitenode] (x) [label=right:$x$] {};

\draw (u) edge node  {} (x);
\draw (u) edge node  {} (w);
\end{tikzpicture}
\label{fig:cc2}
}
\qquad
\subfloat[][\textbf{($C_3$)}]{
\centering
 \begin{tikzpicture}[scale=0.95]
    \tikzstyle{whitenode}=[draw,circle,fill=white,minimum size=8pt,inner sep=0pt]
    \tikzstyle{blacknode}=[draw,circle,fill=black,minimum size=6pt,inner sep=0pt]
\tikzstyle{tnode}=[draw,ellipse,fill=white,minimum size=8pt,inner sep=0pt]
\tikzstyle{texte} =[fill=white, text=black]

 \draw (5,4.5) node[whitenode] (u) [label=left:$u$] {}
-- ++(-90:1cm) node[whitenode] (v) [label=left:$v$]  {}
-- ++(0:1cm) node[whitenode] (w) [label=right:$w$] {}
-- ++(30:1cm) node[whitenode] (x) [label=right:$x$] {}
-- ++(150:1cm) node[whitenode] (y) [label=right:$y$] {};

\draw (u) edge node  {} (y);
\draw (w) edge node  {} (y);
\end{tikzpicture}
\label{fig:cc3}
}
\qquad
\subfloat[][\textbf{($C_4$)}]{
\centering
 \begin{tikzpicture}[scale=0.95]
    \tikzstyle{whitenode}=[draw,circle,fill=white,minimum size=8pt,inner sep=0pt]
    \tikzstyle{blacknode}=[draw,circle,fill=black,minimum size=6pt,inner sep=0pt]
\tikzstyle{tnode}=[draw,ellipse,fill=white,minimum size=8pt,inner sep=0pt]
\tikzstyle{texte} =[fill=white, text=black]

 \draw (-1,0) node[whitenode] (a1) [label=180:$u_p$] {};
 \draw (0,1) node[blacknode] (b2) [label=90:$v_1$] {};
 \draw (1,0) node[whitenode] (a4) [label=0:$u_3$] {};

\draw (-0.866,0.5) node[blacknode] (b1) [label=180-30:$v_p$] {};
\draw (-0.5,0.866) node[whitenode] (a2) [label=90+30:$u_1$] {};
\draw (0.5,0.866) node[whitenode] (a3) [label=60:$u_2$] {};
\draw (0.866,0.5) node[blacknode] (b3) [label=30:$v_2$] {};

\draw (a1) edge node {} (b1);
\draw (b1) edge node {} (a2);
\draw (a2) edge node {} (b2);
\draw (b2) edge node {} (a3);
\draw (a3) edge node {} (b3);
\draw (b3) edge node {} (a4);

\draw (a1) edge[thick,dotted,bend right] node {} (a4);
\end{tikzpicture}
\label{fig:cc4}
}
\qquad
\subfloat[][\textbf{($C_5$)}]{
\centering
 \begin{tikzpicture}[scale=0.95]
    \tikzstyle{whitenode}=[draw,circle,fill=white,minimum size=8pt,inner sep=0pt]
    \tikzstyle{blacknode}=[draw,circle,fill=black,minimum size=6pt,inner sep=0pt]
\tikzstyle{tnode}=[draw,ellipse,fill=white,minimum size=8pt,inner sep=0pt]
\tikzstyle{texte} =[fill=white, text=black]

 \draw (2,-2) node[whitenode] (e4) [label=180:$x_p$] {}
-- ++(90:1cm) node[blacknode] (d2) [label=180:$v_{p+1}$] {}
-- ++(90:1cm) node[whitenode] (c) [label=180:$u$] {}
-- ++(90:1cm) node[blacknode] (d1) [label=180:$v_1$] {}
-- ++(90:1cm) node[whitenode] (e1) [label=180:$x_1$] {};

\draw (3,1.732) node[blacknode] (f1) [label=60:$v_2$] {};
\draw (3.732,1) node[whitenode] (e2) [label=30:$x_2$] {};
\draw (3,-1.732) node[blacknode] (f3) [label=-60:$v_p$] {};
\draw (3.732,-1) node[whitenode] (e3) [label=-30:$x_{p-1}$] {};

\draw (e1) edge node {} (f1);
\draw (f1) edge node {} (e2);
\draw (e4) edge node {} (f3);
\draw (f3) edge node {} (e3);
\draw (c) edge node {} (f1);
\draw (c) edge node {} (f3);

\draw (e2) edge[thick,dotted,bend left] node {} (e3);
\end{tikzpicture}
\label{fig:cc5}
}
\qquad
\subfloat[][\textbf{($C_6$)}]{
\centering
 \begin{tikzpicture}[scale=0.95]
    \tikzstyle{whitenode}=[draw,circle,fill=white,minimum size=8pt,inner sep=0pt]
    \tikzstyle{blacknode}=[draw,circle,fill=black,minimum size=6pt,inner sep=0pt]
\tikzstyle{tnode}=[draw,ellipse,fill=white,minimum size=8pt,inner sep=0pt]
\tikzstyle{texte} =[fill=white, text=black]
 \draw (5,0) node[whitenode] (g) [label=90:$u$] {}
-- ++(0:1cm) node[blacknode] (j) [label=90:$v_1$] {}
-- ++(0:1cm) node[whitenode] (h1) [label=0:$x_1$] {};

\draw (6.414,1.414) node[blacknode] (i1) [label=45:$v_p$] {};
\draw (5,2) node[whitenode] (h2) [label=90:$x_{p-1}$] {};
\draw (6.414,-1.414) node[blacknode] (i3) [label=-45:$v_2$] {};
\draw (5,-2) node[whitenode] (h3) [label=90:$x_2$] {};

\draw (g) edge node {} (i1);
\draw (g) edge node {} (i3);
\draw (h2) edge node {} (i1);
\draw (h1) edge node {} (i1);
\draw (h3) edge node {} (i3);
\draw (h1) edge node {} (i3);

\draw (h2) edge[thick,dotted,bend right] node {} (h3);
\end{tikzpicture}
\label{fig:cc6}
}
\qquad
\subfloat[][\textbf{($C_7$)}]{
\centering
 \begin{tikzpicture}[scale=0.95]
    \tikzstyle{whitenode}=[draw,circle,fill=white,minimum size=8pt,inner sep=0pt]
    \tikzstyle{blacknode}=[draw,circle,fill=black,minimum size=6pt,inner sep=0pt]
\tikzstyle{tnode}=[draw,ellipse,fill=white,minimum size=8pt,inner sep=0pt]
\tikzstyle{texte} =[fill=white, text=black]

\draw (8,2) node[whitenode] (v1) {}
-- ++(0:1cm) node[blacknode] (u1) [label=45:$u_1$] {}
-- ++(0:1cm) node[blacknode] (u) [label=45:$u$] {}
-- ++(0:1cm) node[blacknode] (u2) [label=45:$u_2$] {}
-- ++(0:1cm) node[whitenode] (v2) {};

\draw (u1)
-- ++(90:1cm) node[whitenode] (w1) {};

\draw (u1)
-- ++(-90:1cm) node[whitenode] (x1) {};

\draw (u)
-- ++(90:1cm) node[whitenode] (w) {};

\draw (u)
-- ++(-90:1cm) node[whitenode] (x) {};

\draw (u2)
-- ++(90:1cm) node[whitenode] (w2) {};

\draw (u2)
-- ++(-90:1cm) node[whitenode] (x2) {};
\end{tikzpicture}
\label{fig:cc7}
}
\caption{Forbidden configurations.}
\label{fig:config}
\end{figure}
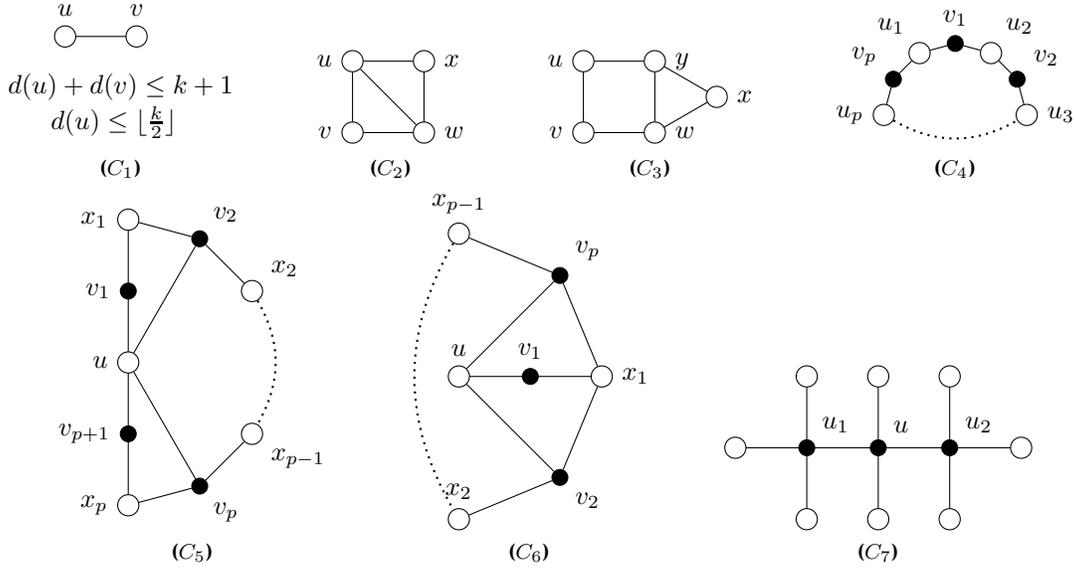
\captionsetup[subfloat]{labelformat=parens}

\begin{lemma}\label{lem:config}
If $G$ is a minimal planar graph such that $\Delta(G) \leq k$, no triangle is adjacent to a cycle of length four, and $\chi''_\ell(G)> k+1$ or $\chi'_\ell(G)>k$, then $G$ cannot contain any of Configurations \textbf{($C_1$)} to \textbf{($C_7$)}.
\end{lemma}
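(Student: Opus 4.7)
The plan is to treat each of the seven configurations in turn, assuming $G$ contains it and deriving a contradiction via minimality. The framework is uniform: choose a subgraph $H$ of $G$ to delete, color $G\setminus H$ by minimality from any list assignment $L$ of sizes $k$ (edge case) or $k+1$ (total case), and check that the deleted elements can always be colored, contradicting the assumption that $\chi'_\ell(G)>k$ or $\chi''_\ell(G)>k+1$.

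Configurations $(C_2)$ and $(C_3)$ fall out of the structural hypothesis directly: in $(C_2)$ the triangle $(u,v,w)$ shares the edge $(u,v)$ with the $C_4$ $(u,v,w,x)$, and in $(C_3)$ the triangle $(w,x,y)$ shares the edge $(w,y)$ with the $C_4$ $(u,v,w,y)$, each violating the hypothesis that no triangle is adjacent to a $C_4$. Configuration $(C_1)$ is the classical light-edge reduction: deleting $uv$ leaves at most $d(u)+d(v)-2\le k-1$ constraints on the edge in the edge case and $d(u)+d(v)\le k+1$ in the total case, always leaving a free color in the list. For $(C_4)$, I remove the $2p$ cycle edges; each such edge then has a list of at least $2$ available colors (its degree-$2$ endpoint contributes no constraint), and since the line graph of a $2p$-cycle is itself a $2p$-cycle and hence $2$-choosable, the cycle edges can be extended; the degree-$2$ vertices are colored last in the total case.

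The main obstacle lies in $(C_5)$, $(C_6)$, and $(C_7)$, where the slack is smaller and the configurations are non-local. My plan for $(C_5)$ and $(C_6)$ is to uncolor the degree-$2$ vertex $v_1$ (and $v_{p+1}$ for $(C_5)$) together with a short sequence of edges near $u$, color the remainder by minimality, and then thread the extension along the chain of degree-$3$ vertices $v_2,\dots,v_p$ around $u$. Each uncolored edge still has a small but positive list of available colors, and completing the coloring reduces to a short list-extension problem along a path or a cycle, solved greedily or by a local color swap in the exceptional total case announced in Section~\ref{sect:meth}. For $(C_7)$, I delete the edge $(u,u_1)$, or the vertex $u$ itself if the total case requires it: the four edges at $u$ can then be colored from lists of size at least $k-3\ge 4$ using $K_{1,4}$-edge-choosability, and in the total case a single local exchange at one incident edge opens up a color for vertex $u$.

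Throughout, the delicate bookkeeping will be in the total-coloring case at $k=7$, where lists are just wide enough that the "free" color in the extension must often be witnessed explicitly by a local recoloring argument rather than by a pure greedy completion. I expect most of the length of the proof to be spent handling these exchanges cleanly for $(C_5)$, $(C_6)$, and $(C_7)$, and in verifying that the intermediate $x_i$ vertices never create an unexpected coincidence that blocks the propagation along the chain.
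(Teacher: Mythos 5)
Your treatment of $(C_1)$--$(C_4)$ matches the paper in outline, but the real content of this lemma is in $(C_5)$, $(C_6)$ and $(C_7)$, and there your plan has concrete gaps. For $(C_5)$, after deleting the $v_i$'s the $p+1$ mutually adjacent edges $(u,v_1),\dots,(u,v_{p+1})$ each retain only $p+1$ colors, and any greedy completion along the chain can leave $(u,v_1)$ and $(u,v_{p+1})$ with coinciding singleton lists; the ``local color swap'' you invoke is precisely the missing argument. The paper handles it by first splitting on whether $L'(v_1,x_1)\subset L'(u,v_1)$, and otherwise proving a sub-claim that the intermediate edges can be colored so that the two final singleton lists differ (either by protecting a color of $L'(u,v_1)\setminus L'(u,v_{p+1})$, or, when $L'(u,v_1)=L'(u,v_{p+1})$, by coloring the path as a closed even cycle so that its two end edges receive distinct colors). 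For $(C_6)$ the residual uncolored graph is a $K_{2,3}$ on $\{u,x_1\}\cup\{v_1,v_2,v_p\}$ with lists of sizes $3,3,2,2,2,2$; this is not greedily colorable, and the paper needs a dedicated lemma (its Lemma~\ref{clb:graphG}) with a four-way case analysis on list intersections, which applies only after one first arranges $L''(x_1,v_2)\neq L''(x_1,v_p)$. None of this is recoverable from ``solved greedily or by a local color swap.''

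For $(C_7)$ your concrete claim is false: if you delete the vertex $u$, only the two edges towards the degree-$4$ neighbors $u_1,u_2$ retain $k-3$ colors, while an edge $(u,w)$ towards a neighbor of degree up to $k$ retains as few as one color in the total case, so ``the four edges at $u$ can be colored from lists of size at least $k-3\ge 4$'' does not hold. The paper instead deletes only $(u,u_1)$ and $(u,u_2)$, discolors $u,u_1,u_2$, and runs a two-case analysis on whether $L'(u_1)\cap L'(u,u_2)=\emptyset$, using $|L'(u_1)\cup L'(u,u_2)|=5>|L'(u)|$ in the disjoint case. Finally, you never state the reduction that makes the total case of $(C_1)$, $(C_4)$, $(C_5)$, $(C_6)$ follow from the edge case: every vertex to be recolored there has degree at most $k/2$, so it can be discolored and colored last, after which each uncolored edge has one extra color against at most one extra constraint. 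Without that observation your own count for $(C_1)$ in the total case ($k+1$ constraints against $k+1$ colors) does not close.
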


\begin{proof}
First note that in the case of list total coloring, any vertex $v$ in $G$ with $d(v)\leq \frac{k}{2}$ can be colored no matter the coloring of its incident edges and adjacent vertices. Except in the case of \textbf{($C_7$)}, which we deal with appropriately in Claim~\ref{cl:c7}, every edge we have to color is incident to a vertex of degree at most $\frac{k}{2}$, and every vertex we have to color is of degree $\leq \frac{k}{2}$. Consequently, except for \textbf{($C_7$)}, it is safe to consider the problem of list edge coloring only. Indeed, in the case of list total coloring, we can discolor every colored vertex of degree $\leq \frac{k}{2}$, then every edge we have to color has an extra color and at most one extra constraint, and every vertex we have to color will still have enough choices remaining at the end.

\begin{claim}\label{cl:c1}
$G$ cannot contain \textbf{($C_1$)}.
\end{claim}
\begin{proof}
Using the minimality of $G$, we color $G \setminus \{(u,v)\}$. Since $\Delta(G) \leq k$, and $d(u)+d(v) \leq k+1$, the edge $(u,v)$ has at most $k-1$ constraints. There are $k$ colors, so we can color $(u,v)$, thus extending the coloring of $G \setminus \{(u,v)\}$ to $G$.
\end{proof}

\begin{claim}
$G$ cannot contain \textbf{($C_2$)}.
\end{claim}
\begin{proof}
The triangle $(u,v,w)$ shares two edges with the cycle $(u,v,w,x)$ of length $4$.
\end{proof}

\begin{claim}
$G$ cannot contain \textbf{($C_3$)}.
\end{claim}
\begin{proof}
The triangle $(u,x,y)$ shares an edge with the cycle $(u,v,w,x)$ of length $4$.
\end{proof}

\begin{claim}
$G$ cannot contain \textbf{($C_4$)}.
\end{claim}
\begin{proof}
Using the minimality of $G$, we color $G \setminus \{v_i\}_{1 \leq i \leq p}$. Every edge $(u_i,v_i)$ or $(v_i,u_{i+1})$ (subscript taken modulo p) has at most $k-2$ constraints, so there are at least $2$ colors available for each of them. Since even cycles are $2$-choosable, we can color the $(u_i,v_i)$'s and $(v_i,u_{i+1})$'s. Then we can extend the coloring of $G \setminus \{v_i\}_{1 \leq i \leq p}$ to $G$.
\end{proof}

\begin{claim}
$G$ cannot contain \textbf{($C_5$)}.
\end{claim}
\begin{proof}
Let $L : E \rightarrow \mathcal{P}(\mathbb{N})$ be a color assignment such that $\forall a \in E, |L(a)|\geq k$ and such that $G$ is not $L$-colorable. Using the minimality of $G$, we $L$-color $G \setminus \{v_i | 1 \leq i \leq p+1 \}$. We denote by $L'(e)$ the remaining available colors for every edge $e$ that is not colored yet.

Every edge $e$ incident to $u$ and not colored yet has at most $d(u)-(p+1) \leq k-(p+1)$ constraints, thus $|L'(e)|\geq p+1$. Every edge $e$ that is not incident to $u$ and is not colored yet has at most $k-2$ constraints, thus $|L'(e)|\geq 2$. We consider the worst case, i.e. that these inequalities are actually equalities. 

We first consider the case where $L'(v_1,x_1)\not\subset L'(u,v_1)$ or $L'(v_{p+1},x_p)\not\subset L'(u,v_{p+1})$. Consider w.l.o.g. $L'(v_1,x_1)\not\subset L'(u,v_1)$. Color $(v_1,x_1)$ with a color that does not belong to $L'(u,v_1)$, and color arbitrarily $(x_1,v_2),\ldots,(x_p,v_{p+1})$, successively. Then at least $p-1$ colors remain for each $(u,v_i)$ with $2 \leq i \leq p$, while $p$ colors remain for $(u,v_{p+1})$ and $p+1$ for $(u,v_1)$ by assumption. We color arbitrarily $(u,v_2), \ldots, (u,v_{p+1})$, in that order, and finally $(u,v_1)$: then $G$ is $L$-colorable, a contradiction. Thus we can assume from now on that $L'(v_1,x_1)\subset L'(u,v_1)$ and $L'(v_{p+1},x_p)\subset L'(u,v_{p+1})$. We prove the following.
\begin{claimb}\label{clb:rec5}
We can color $\{(u,v_i),(v_i,x_i),(x_i,v_{i+1})| 1 \leq i \leq p\}\setminus \{u,v_1\}$ in such a way that for $L''$ the list assignment of remaining available colors for the edges uncolored yet (here $(u,v_1)$ and $(u,v_{p+1})$), we have $L''(u,v_1) \neq L''(u,v_{p+1})$ if $|L''(u,v_1)|=|L''(u,v_{p+1}|=1$.
\end{claimb}
\begin{proofclaim}
We consider two cases depending on whether $L'(u,v_1)=L'(u,v_{p+1})$.
\begin{itemize}
\item \textit{Assume $L'(u,v_1)\neq L'(u,v_{p+1})$.\\}Let $a$ be a color in $L'(u,v_1)\setminus L'(u,v_{p+1})$. Color $(v_1,x_1)$ with a color other than $a$, then color successively $(x_1,v_2),\ldots,(x_p,v_{p+1}),(u,v_2),\ldots,(u,v_p)$. Now $|L''(u,v_1)|\geq 1, \ |L''(u,v_{p+1})|\geq 1$ and $L''(u,v_1)\neq L''(u,v_{p+1})$ if $|L''(u,v_1)|=|L''(u,v_{p+1})|=1$. Indeed, if $|L''(u,v_{p+1})|=1$ then, since $a \not\in L'(u,v_{p+1})$, the color $a$ does not appear on the edges $(x_1,v_2),\ldots,(x_p,v_{p+1}),(u,v_2),\ldots,(u,v_p)$. Together with the fact that $(v_1,x_1)$ was purposely not colored with $a$ and these are the only uncolored edges around $(u,v_1)$, we have that $a \in L''(u,v_1) \setminus L''(u,v_{p+1})$.
\item \textit{Assume $L'(u,v_1)=L'(u,v_{p+1})$.\\}We color $(v_1,x_1),(x_1,v_2),\ldots,(x_p,v_{p+1})$ as though it were a cycle (i.e. $(v_1,x_1)$ and $(x_p,v_{p+1})$ have to receive different colors): it is possible since even cycles are $2$-choosable. Then we color arbitrarily the $(u,v_i)$'s with $2 \leq i \leq p$. It follows that $L''(u,v_1) \neq L''(u,v_{p+1})$ if $|L''(u,v_1)|=|L''(u,v_{p+1})|=1$. Indeed, $L'(u,v_1)=L'(u,v_{p+1})$, and for $S$ the set of colors on the edges $(u,v_2),\ldots,(u,v_p)$, for $\alpha$ and $\beta$ the colors of $(v_1,x_1)$ and $(v_{p+1})$, we have $L''(u,v_1)=L'(u,v_1)\setminus (S \cup \alpha)$ and $L''(u,v_{p+1})=L'(u,v_1)\setminus (S \cup \beta)$. If $|L''(u,v_1)|=|L''(u,v_{p+1})|=1$, then $\{\alpha,\beta\}\cap S = \emptyset$. Since $\alpha \neq \beta$, this implies $L''(u,v_1) \neq L''(u,v_{p+1})$.
\end{itemize}

\end{proofclaim}

By (\ref{clb:rec5}), we color $\{(u,v_i),(v_i,x_i),(x_i,v_{i+1})| 1 \leq i \leq p\}\setminus \{u,v_1\}$ in such a way that, for $L''$ the list of remaining available colors for $(u,v_1)$ and $(u,v_{p+1})$, we have $L''(u,v_1) \neq L''(u,v_{p+1})$ if $|L''(u,v_1)|=|L''(u,v_{p+1}|=1$. We color arbitrarily $(u,v_1)$ and $(u,v_{p+1})$, starting with the one with fewest available colors if any, thus extending the $L$-coloring of $G \setminus \{v_i | 1 \leq i \leq p+1\}$ to an $L$-coloring of $G$, a contradiction.
\end{proof}

We first prove an intermediary lemma which will be instrumental in the proof of  Claim~\ref{cl:c6}.
\begin{lemma}\label{clb:graphG}
Let $\Gamma$ be $K_{2,3}$, and $y$ (resp. $z$) be a vertex of degree $3$ (resp. $2$) in $\Gamma$. The graph $\Gamma$ is $L_1$-edge-colorable for any list assignment $L_1$ of $3$ colors to each of the two edges incident to $z$ and $2$ colors to each of the other edges, where the two edges incident to $y$ but not to $z$ do not receive the two same colors. 
\end{lemma}
\begin{proof}
We denote $a,b,c$ the three edges incident to $y$, where $c$ is the edge $(y,z)$, and $d$ (resp. $e$, $f$) the other edge incident to $c$ (resp. $b$, $a$). We have $|L_1(a)|=|L_1(b)|=|L_1(e)|=|L_1(f)|=2$ and $|L_1(c)|=|L_1(d)|=3$, with $L_1(a)\neq L_1(b)$.
We consider different cases depending on the list intersections. In the first three cases, we do not use the fact $L(a) \neq L(b)$, which allows us to consider in these cases the problem to be symmetric w.r.t. $(a,b,c)$ and $(f,e,d)$.
\begin{itemize}
\item \textit{Assume $L_1(a)\cap L_1(e)\neq L_1(b)\cap L_1(f)$}.\\W.l.o.g., assume that $(L_1(a)\cap L_1(e))\setminus (L_1(b)\cap L_1(f))\neq \emptyset$ and take an element $\alpha$ of it. Color $a$ and $e$ with $\alpha$. One of $b$ and $f$ still has $2$ colors available. Assume w.l.o.g. it is $b$. Then we color successively $f, d, c$ and $b$.
\item \textit{Assume $L_1(a)\cap L_1(e)=L_1(b)\cap L_1(f)$ and $L_1(a)\cup L_1(e) \neq L_1(b)\cup L_1(f)$}.\\Then assume w.l.o.g. $L_1(a) \subsetneq L_1(b)\cup L_1(f)$. Color $a$ with $\alpha\not\in (L_1(b)\cup L_1(f))$. Then either $L_1(e)=L_1(f)$ and we color $d$ with $\beta \not\in L(f)$, then color successively $c, b, e$ and $f$. Or $L_1(f)\neq L_1(e)$: we color $f$ with a color not in $L_1(e)$, and we can color $(b,c,d,e)$ since even cycles are $2$-choosable.
\item \textit{Assume $L_1(a)\cap L_1(e)=L_1(b)\cap L_1(f)$, $L_1(a)\cup L_1(e) = L_1(b)\cup L_1(f)$ and $L_1(a)\cup L_1(b) \not\subseteq L_1(c)$ or $L_1(e)\cup L_1(f) \not\subseteq L_1(d)$}.\\Then assume w.l.o.g. $L_1(a)\cup L_1(b) \not\subseteq L_1(c)$ and there is $\alpha \in L_1(a)\setminus L_1(c)$. Color $a$ with $\alpha$. If $\alpha \not\in L_1(b)$, color $f, e, b, d$ and $c$, and similarly if $\alpha \not\in L_1(f)$: color $b, e, f, d$ and $c$. If $\alpha \in L_1(b)\cap L_1(f)$, then $\alpha \in L_1(e)$ by assumption. Then we color $e$ with $\alpha$, and color successively $b,f,d$ and $c$.
\item \textit{Assume $L_1(a)\cap L_1(e)=L_1(b)\cap L_1(f)$, $L_1(a)\cup L_1(e) = L_1(b)\cup L_1(f)$, $L_1(a)\cup L_1(b) \subseteq L_1(c)$ and $L_1(e)\cup L_1(f) \subseteq L_1(d)$}.\\Then we must have $L_1(a)=\{1,2\}$, $L_1(b)=\{1,3\}$ and $L_1(c)=\{1,2,3\}$, and for some $\alpha \not\in \{2,3\}$, $L_1(f)=\{\alpha,2\}$, $L_1(e)=\{\alpha,3\}$ and $L_1(d)=\{\alpha,2,3\}$. Then we color $a$ with $1$, $b$ with $3$, $c$ with $2$, $d$ with $3$, $e$ with $\alpha$ and $f$ with $2$. 
\end{itemize}
Thus $\Gamma$ is $L_1$-colorable.
\end{proof}

\begin{claim}\label{cl:c6}
$G$ cannot contain \textbf{($C_6$)}.
\end{claim}
\begin{proof}
Let $L : V \cup E \rightarrow \mathcal{P}(\mathbb{N})$ (resp. $E \rightarrow \mathcal{P}(\mathbb{N})$) be a color assignment such that $\forall a \in V \cup E, |L(a)|\geq k+1$ (resp. $\forall a \in E, |L(a)|\geq k$) and such that $G$ is not $L$-colorable. Using the minimality of $G$, we $L$-color $G \setminus \{v_i | 1 \leq i \leq p\}$. Note that $d(v_i) \leq 3<\frac{k}{2}$ for all $1 \leq i \leq p$, so coloring the edges is enough. We denote by $L'(e)$ the remaining available colors for every edge $e$ that is not colored yet.

Every edge $e$ incident to $u$ and not colored yet has at most $d(u)-p+1$ (resp. $d(u)-p$) constraints, thus $|L'(e)|\geq p$. Every edge $e$ incident to $x_1$ and not colored yet has at most $d(x_1)-3+1$ (resp. $d(x_1)-3$) constraints, thus $|L'(e)|\geq 3$. Every edge $e$ that is not incident to $u$ nor $x_1$ and is not colored yet has at most $k-2+1$ (resp. $k-2$) constraints, thus $|L'(e)|\geq 2$. In the worst case, these inequalities are actually equalities. We first prove the following two claims.
\begin{claimb}\label{clb:rec6}
We can color $\{(u,v_i),(v_i,x_i),(x_i,v_{i+1})| 2 \leq i \leq p-1\}\setminus \{u,v_2\}$ in such a way that for $L''$ the list assignment of remaining available colors for the edges uncolored yet, we have $L''(x_1,v_2) \neq L''(x_1,v_p)$ if $|L''(x_1,v_2)|=|L''(x_1,v_p)|=2$.
\end{claimb}
\begin{proofclaim}
We consider two cases depending on whether $L'(x_1,v_2)=L'(x_1,v_p)$.
\begin{itemize}
\item \textit{Assume $L'(x_1,v_2)\neq L'(x_1,v_p)$.}\\Let $a \in L'(x_1,v_2)\setminus L'(x_1,v_p)$, and color $(v_2,x_2)$ with a color distinct from $a$. Color successively $(x_2,v_3),\ldots,(x_{p-1},v_p)$, then $(u,v_3),\ldots,(u,v_{p-1})$. Now $a\in L''(x_1,v_2)\setminus L''(x_1,v_p)$ unless $|L''(x_1,v_p)|\geq 3$.
\item \textit{Assume $L'(x_1,v_2)=L'(x_1,v_p)$.}\\We color $(v_2,x_2),(x_2,v_3),\ldots,(x_{p-1},v_p)$ as though it were a cycle (i.e. $(v_2,x_2)$ and $(x_{p-1},v_p)$ have to receive different colors): it is possible since even cycles are $2$-choosable. Then we color arbitrarily the $(u,v_i)$'s with $3 \leq i \leq p-1$. It follows that $L''(x_1,v_2) \neq L''(x_1,v_p)$ if $|L''(x_1,v_2)|=|L''(x_1,v_p)|=2$.
\end{itemize}
\end{proofclaim}

By (\ref{clb:rec6}), we color $\{(u,v_i),(v_i,x_i),(x_i,v_{i+1})| 2 \leq i \leq p-1\}\setminus \{u,v_2\}$ in such a way that for $L''$ the list assignment of remaining available colors for the edges uncolored yet, we have $L''(x_1,v_2) \neq L''(x_1,v_p)$ if $|L''(x_1,v_2)|=|L''(x_1,v_p)|=2$. Then, we can assume $|L''(x_1,v_2)|=|L''(x_1,v_p)|=|L''(u,v_2)|=|L''(u,v_p)|=2$, $|L''(u,v_1)|=|L''(v_1,x_1)|=3$ and $L''(x_1,v_2) \neq L''(x_1,v_p)$. Then we color $G$ by Lemma~\ref{clb:graphG}.
\end{proof}

\begin{claim}\label{cl:c7}
$G$ cannot contain \textbf{($C_7$)}.
\end{claim}
\begin{proof}
This is the only situation where list total coloring requires a special argument. The case of list edge coloring is straightforward: we color $G\setminus{(u,u_1)}$, it has at most $6$ adjacent edges, and at least $7$ colors in its list, so we can color it. From now on, we consider specifically the case of list total coloring. Using the minimality of $G$, we color $G\setminus{(u,u_1),(u,u_2)}$. We discolor $u$, $u_1$ and $u_2$. Let $L'$ be the remaining available colors for the edges and vertices that are not colored yet. Since $k \geq 7$ and all the lists are of size at least $k+1$, we have in the worst case $|L'(u_1)|=|L'(u_2)|=2$, $|L'(u,u_1)|=|L'(u,u_2)|=3$ and $|L'(u)|=4$. We consider two cases depending on $L'(u_1)$ and $L'(u,u_2)$.
\begin{itemize}
\item \textit{Assume there exists $a \in L'(u_1) \cap L'(u,u_2)$}. We color $u_1$ and $(u,u_2)$ with $a$, then we color $u_2$, $(u,u_1)$ and $u$.
\item \textit{Assume $L'(u_1) \cap L'(u,u_2)=\emptyset$}. Then $|L'(u_1)\cup L'(u,u_2)|=5> |L'(u)|$. So there exists $a \in L'(u_1)\cup L'(u,u_2) \setminus L'(u)$. Assume $a \in L'(u_1)$ (resp. $a \in L'(u,u_2)$). Color $u_1$ (resp. $(u,u_2)$) with $a$, then color $u_2$, $(u,u_1)$, $(u,u_2)$ (resp. $u_1$) and $u$.
\end{itemize}
Thus the coloring can be extended to $u,u_1,u_2,(u,u_1),(u,u_2)$, a contradiction.
\end{proof}

\end{proof}

\section{Discharging rules}\label{sect:dis}

Given a planar map, we design discharging rules $R_{1.1}$, $R_{1.2}$, $R_{1.3}$, $R_{1.4}$, $R_{2.1}$, $R_{2.2}$, $R_{3.1}$, $R_{3.2}$, $R_4$ and $R_g$ (see Figure~\ref{fig:rules}). We also use a so-called \emph{common pot} which is empty at the beginning, receives weight from some vertices and gives weight to some others.\\
\textbf{\underline{Rules on faces:\newline }}

For any face $f$ of degree at least $4$,

\begin{itemize}
\item Rule $R_1$ is when $f$ is incident to a vertex $u$ of degree $d(u)\leq 3$.
\begin{itemize}
\item Rule $R_{1.1}$ is when $d(f)=4$, and for $v$ the vertex incident to $f$ that is not consecutive to $u$ on the boundary of $f$, we have $d(v)\leq 3$. Then $f$ gives $1$ to $u$.
\item Rule $R_{1.2}$ is when $d(f)=4$, and for $v$ the vertex incident to $f$ that is not consecutive to $u$ on the boundary of $f$, we have $d(v)\geq 4$. Then $f$ gives $\frac{3}{2}$ to $u$.
\item Rule $R_{1.3}$ is when $d(f)\geq 5$ and $d(u)=3$ or $d(u)=2$ and the two neighbors of $u$ are not adjacent. Then $f$ gives $\frac{3}{2}$ to $u$.
\item Rule $R_{1.4}$ is when $d(f)\geq 6$ and $d(u)=2$ such that its two neighbors are adjacent. Then $f$ gives $\frac{5}{2}$ to $u$.
\end{itemize}
\item Rule $R_2$ is when $f$ is incident to a vertex $u$ of degree $4 \leq d(u)\leq 5$.
\begin{itemize}
\item Rule $R_{2.1}$ is when $d(f)=4$ or $d(u)=5$. Then $f$ gives $\frac{1}{2}$ to $u$.
\item Rule $R_{2.2}$ is when $d(f)\geq 5$ and $d(u)=4$. Then $f$ gives $1$ to $u$.
\end{itemize}
\item Rule $R_3$ is when $f$ contains an edge such that there is a vertex $u$ of degree $d(u)=2$ that is adjacent to its two endpoints. Then $f$ gives $\frac{1}{2}$ to $u$.
\end{itemize}
Note that if a vertex $u$ appears more than once on the boundary of $f$, the rules are applied as many times as $u$ appears on the boundary.\newline
\textbf{\underline{Rules on vertices:\newline }}

\begin{itemize}
\item Rule $R_4$ states that for any quadruple $(x,u,u_1,v)$ such that $x$ is the $(v,u_1)$-support of $u$, $x$ gives $\frac{1}{4}$ to $u$. (Note that $R_4$ can be applied twice for the same $x$ and $u$ if there are two different such quadruples involving them).
\item Rule $R_g$ states that for any vertex $x$ of degree $k$, $x$ gives $1$ to the common pot, and every vertex of degree $2$ draws $1$ from it.
\end{itemize}

\captionsetup[subfloat]{labelformat=empty}
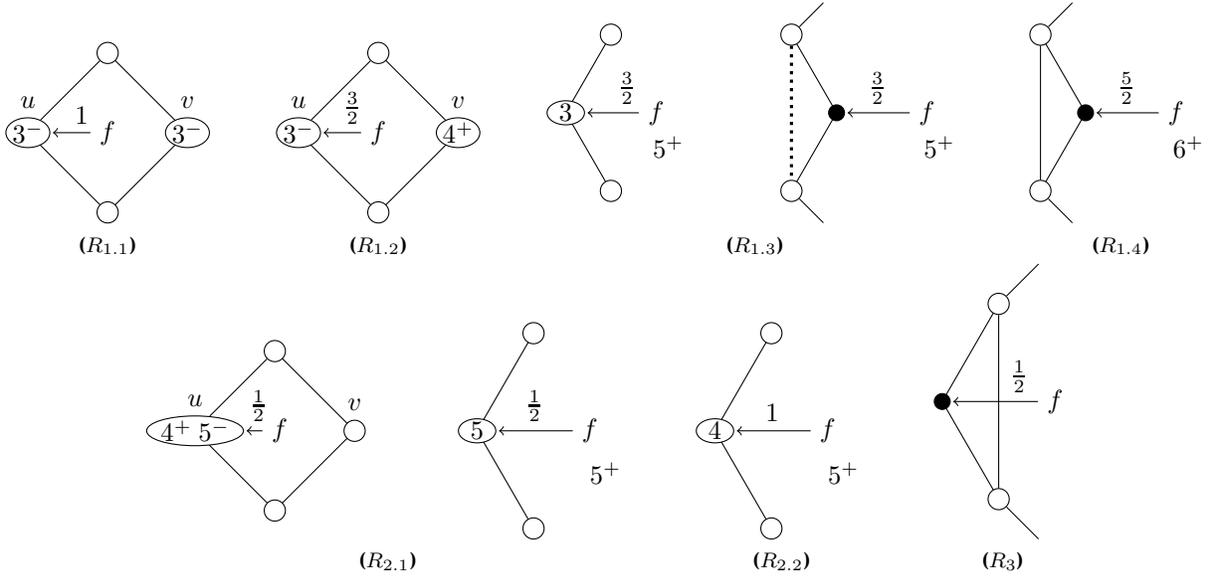
\begin{figure}[!h]
\centering
\subfloat[][\textbf{($R_{1.1}$)}]{
\centering
 \begin{tikzpicture}[scale=1.5,auto]
    \tikzstyle{whitenode}=[draw,circle,fill=white,minimum size=8pt,inner sep=0pt]
    \tikzstyle{blacknode}=[draw,circle,fill=black,minimum size=6pt,inner sep=0pt]
\tikzstyle{tnode}=[draw,ellipse,fill=white,minimum size=8pt,inner sep=0pt]
\tikzstyle{texte} =[fill=white, text=black]

 \draw (0,0) node[tnode] (u) [label=90:$u$] {$3^-$}
-- ++(-45:1cm) node[whitenode] (v) {}
-- ++(45:1cm) node[tnode] (w) [label=90:$v$] {$3^-$}
-- ++(45+90:1cm) node[whitenode] (x) {};

\draw (u) edge node {} (x);
\draw (0.707,0) node[texte] (a) {$f$};

\draw[pre,bend right=0,pos=0.8] (u) edge node {$1$} (a);
\end{tikzpicture}
\label{fig:cc1}
}
\qquad
\subfloat[][\textbf{($R_{1.2}$)}]{
\centering
 \begin{tikzpicture}[scale=1.5,auto]
    \tikzstyle{whitenode}=[draw,circle,fill=white,minimum size=8pt,inner sep=0pt]
    \tikzstyle{blacknode}=[draw,circle,fill=black,minimum size=6pt,inner sep=0pt]
\tikzstyle{tnode}=[draw,ellipse,fill=white,minimum size=8pt,inner sep=0pt]
\tikzstyle{texte} =[fill=white, text=black]

 \draw (0,0) node[tnode] (u) [label=90:$u$] {$3^-$}
-- ++(-45:1cm) node[whitenode] (v) {}
-- ++(45:1cm) node[tnode] (w) [label=90:$v$] {$4^+$}
-- ++(45+90:1cm) node[whitenode] (x) {};

\draw (u) edge node {} (x);
\draw (0.707,0) node[texte] (a) {$f$};

\draw[pre,bend right=0,pos=0.8] (u) edge node {$\frac{3}{2}$} (a);
\end{tikzpicture}
\label{fig:cc2}
}
\qquad
\subfloat[][\textbf{($R_{1.3}$)}]{
\centering
 \begin{tikzpicture}[scale=1.2,auto]
    \tikzstyle{whitenode}=[draw,circle,fill=white,minimum size=8pt,inner sep=0pt]
    \tikzstyle{blacknode}=[draw,circle,fill=black,minimum size=6pt,inner sep=0pt]
\tikzstyle{tnode}=[draw,ellipse,fill=white,minimum size=8pt,inner sep=0pt]
\tikzstyle{texte} =[fill=white, text=black]

 \draw (0,0) node[tnode] (u) {}
-- ++(-90-30:1cm) node[tnode] (v) { $\ 3 \ $}
-- ++(-90+30:1cm) node[tnode] (w) {};

\draw (v)
++ (0:1cm) node[texte] (a) {$f$};
\draw (a)
++ (-70:0.4cm) node[texte] (b) {$5^+$};

\draw[pre,bend right=0,pos=0.8] (v) edge node {$\frac{3}{2}$} (a);

 \draw (2,0) node[tnode] (u) {}
-- ++(-90+30:1cm) node[blacknode] (v) {}
-- ++(-90-30:1cm) node[tnode] (w) {};

\draw[very thick,dotted] (u) edge node {} (w);
\draw (u) edge node {} ++(45:0.5cm);
\draw (w) edge node {} ++(-45:0.5cm);

\draw (v)
++ (0:1cm) node[texte] (a) {$f$};
\draw (a)
++ (-70:0.4cm) node[texte] (b) {$5^+$};

\draw[pre,bend right=0,pos=0.5] (v) edge node {$\frac{3}{2}$} (a);
\end{tikzpicture}
\label{fig:cc3}
}
\qquad
\subfloat[][\textbf{($R_{1.4}$)}]{
\centering
 \begin{tikzpicture}[scale=1.2,auto]
    \tikzstyle{whitenode}=[draw,circle,fill=white,minimum size=8pt,inner sep=0pt]
    \tikzstyle{blacknode}=[draw,circle,fill=black,minimum size=6pt,inner sep=0pt]
\tikzstyle{tnode}=[draw,ellipse,fill=white,minimum size=8pt,inner sep=0pt]
\tikzstyle{texte} =[fill=white, text=black]

 \draw (2,0) node[tnode] (u) {}
-- ++(-90+30:1cm) node[blacknode] (v) {}
-- ++(-90-30:1cm) node[tnode] (w) {};

\draw (u) edge node {} (w);
\draw (u) edge node {} ++(45:0.5cm);
\draw (w) edge node {} ++(-45:0.5cm);

\draw (v)
++ (0:1cm) node[texte] (a) {$f$};
\draw (a)
++ (-70:0.4cm) node[texte] (b) {$6^+$};

\draw[pre,bend right=0,pos=0.5] (v) edge node {$\frac{5}{2}$} (a);
\end{tikzpicture}
\label{fig:cc4}
}
\qquad
\subfloat[][\textbf{($R_{2.1}$)}]{
\centering
 \begin{tikzpicture}[scale=1.5,auto]
    \tikzstyle{whitenode}=[draw,circle,fill=white,minimum size=8pt,inner sep=0pt]
    \tikzstyle{blacknode}=[draw,circle,fill=black,minimum size=6pt,inner sep=0pt]
\tikzstyle{tnode}=[draw,ellipse,fill=white,minimum size=8pt,inner sep=0pt]
\tikzstyle{texte} =[fill=white, text=black]

 \draw (0,0) node[tnode] (u) [label=90:$u$] {$4^+ \ 5^-$}
-- ++(-45:1cm) node[whitenode] (v) {}
-- ++(45:1cm) node[tnode] (w) [label=90:$v$] {}
-- ++(45+90:1cm) node[whitenode] (x) {};

\draw (u) edge node {} (x);
\draw (0.75,0) node[texte] (a) {$f$};

\draw[pre,bend right=0,pos=0.8] (u) edge node {$\frac{1}{2}$} (a);

 \draw (3,0.866) node[tnode] (u) {}
-- ++(-90-30:1cm) node[tnode] (v) { $\ 5 \ $}
-- ++(-90+30:1cm) node[tnode] (w) {};

\draw (v)
++ (0:1cm) node[texte] (a) {$f$};
\draw (a)
++ (-70:0.4cm) node[texte] (b) {$5^+$};

\draw[pre,bend right=0,pos=0.5] (v) edge node {$\frac{1}{2}$} (a);
\end{tikzpicture}
\label{fig:cc5}
}
\qquad
\subfloat[][\textbf{($R_{2.2}$)}]{
\centering
 \begin{tikzpicture}[scale=1.5,auto]
    \tikzstyle{whitenode}=[draw,circle,fill=white,minimum size=8pt,inner sep=0pt]
    \tikzstyle{blacknode}=[draw,circle,fill=black,minimum size=6pt,inner sep=0pt]
\tikzstyle{tnode}=[draw,ellipse,fill=white,minimum size=8pt,inner sep=0pt]
\tikzstyle{texte} =[fill=white, text=black]
 \draw (3,0.866) node[tnode] (u) {}
-- ++(-90-30:1cm) node[tnode] (v) { $\ 4 \ $}
-- ++(-90+30:1cm) node[tnode] (w) {};

\draw (v)
++ (0:1cm) node[texte] (a) {$f$};
\draw (a)
++ (-70:0.4cm) node[texte] (b) {$5^+$};

\draw[pre,bend right=0,pos=0.5] (v) edge node {$1$} (a);
\end{tikzpicture}
\label{fig:cc6}
}
\qquad
\subfloat[][\textbf{($R_3$)}]{
\centering
 \begin{tikzpicture}[scale=1.5,auto]
    \tikzstyle{whitenode}=[draw,circle,fill=white,minimum size=8pt,inner sep=0pt]
    \tikzstyle{blacknode}=[draw,circle,fill=black,minimum size=6pt,inner sep=0pt]
\tikzstyle{tnode}=[draw,ellipse,fill=white,minimum size=8pt,inner sep=0pt]
\tikzstyle{texte} =[fill=white, text=black]

 \draw (2,0) node[tnode] (u) {}
-- ++(-90-30:1cm) node[blacknode] (v) {}
-- ++(-90+30:1cm) node[tnode] (w) {};

\draw[] (u) edge node {} (w);

\draw (v)
++ (0:1cm) node[texte] (a) {$f$};

\draw[pre,bend right=0,pos=0.8] (v) edge node {$\frac{1}{2}$} (a);
\draw (u) edge node {} ++(45:0.5cm);
\draw (w) edge node {} ++(-45:0.5cm);
\end{tikzpicture}
\label{fig:cc7}
}
\caption{Discharging rules $R_1$, $R_2$ and $R_3$.}
\label{fig:rules}
\end{figure}
\captionsetup[subfloat]{labelformat=parens}

\begin{lemma}\label{lem:rules}
A graph $G$ with $\Delta(G) \leq k$ that does not contain Configurations \textbf{($C_1$)} to \textbf{($C_{7}$)} is not planar.
\end{lemma}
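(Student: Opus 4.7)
The plan is to execute the second step of the discharging argument from Section~\ref{sect:meth}. I would assume for contradiction that $G$ is a connected planar graph on at least two vertices satisfying the hypotheses, fix a planar embedding, and assign initial charges $\omega(v)=d(v)-6$ and $\omega(f)=2d(f)-6$, so that Euler's formula forces the total initial charge to be exactly $-12$. I would then apply the rules $R_{1.1}$--$R_g$, tracking the common pot as an extra account (the rules conserve total charge), and show that every vertex, every face, and the common pot end with non-negative charge, contradicting the total $-12$. Note that forbidding $(C_2)$ and $(C_3)$ is equivalent to forbidding any triangle adjacent to a $C_4$, which I will use freely in what follows.

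I would split the verification by element type. For \emph{faces}: a triangle keeps its charge $0$ since it is never the source of a rule (each of $R_{1.*}, R_{2.*}, R_3$ requires $d(f)\ge 4$, and $R_3$ in fact requires $d(f)\ge 5$ as two adjacent triangles would trigger $(C_2)$). For $d(f)\ge 4$, the initial surplus $2d(f)-6$ absorbs the outflows via a routine case analysis, once $(C_2),(C_3)$ rule out chord structures that would force too many rules to fire simultaneously. For \emph{vertices} of degree $\in[6,k{-}1]$, the initial charge is $\ge 0$ and no rule removes from them. For degree-$k$ vertices, losses total $1$ from $R_g$ plus at most $\tfrac{1}{2}$ per degree-$2$ neighbor via $R_4$ (the at-most-twice support cap from Section~\ref{sect:def}), well within the surplus $k-6\ge 1$. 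For degrees $3,4,5$, the inflow from incident faces via $R_{1.*}, R_{2.*}$ suffices, with $(C_7)$ forcing enough large incident faces around a degree-$4$ vertex and $(C_5)$ restricting the support/degree-$3$ structure.

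The tightest subcase is a degree-$2$ vertex $w$ with adjacent neighbors $n_1, n_2$: the incident triangle face gives $0$, but by $(C_2),(C_3)$ the other incident face $f_2$ has degree $\ge 6$, so $R_{1.4}$ contributes $\tfrac{5}{2}$; combining $\tfrac{1}{2}$ from the face across edge $(n_1,n_2)$ via $R_3$ with $+1$ from $R_g$ exactly covers the deficit $-4$. For a degree-$2$ vertex with non-adjacent neighbors, each of the two incident faces contributes $\ge 1$ via $R_{1.1}$--$R_{1.3}$, supplemented when needed by $R_4$-support contributions (whose existence is guaranteed by absence of $(C_5),(C_6)$) and $+1$ from $R_g$. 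For the \emph{common pot}, I would argue $n_k>n_2$: by $(C_1)$ each degree-$2$ vertex has both neighbors of degree exactly $k$, so the bipartite graph $H$ on degree-$2$ and degree-$k$ vertices with an edge per adjacency has $|E(H)|=2n_2$; any cycle in $H$ alternates degree-$2$ and degree-$k$ vertices, reproducing configuration $(C_4)$, which is forbidden. Hence $H$ is a forest, $2n_2\le n_2+n_k-1$, and the pot ends at $n_k-n_2\ge 1>0$.

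The hard part will be the tight bookkeeping for low-degree vertices lying in small faces — in particular, verifying that the combined inflows from $R_{1.*}, R_{2.*}, R_3, R_4$ and $R_g$ meet the initial deficits in every structurally allowed configuration, with the support mechanism of Section~\ref{sect:def} delivering just enough $R_4$-inflow precisely when the face rules alone fall short. Once these cases are closed, all other checks reduce to straightforward sign-counting.
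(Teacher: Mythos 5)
Your global strategy is the right one and is the same as the paper's (it is exactly the plan announced in Section~\ref{sect:meth}): initial charges $d(v)-6$ and $2d(f)-6$ summing to $-12$ by Euler's formula, the rules $R_1$ to $R_g$, the forest argument for the common pot, and the tight $\frac{5}{2}+\frac{1}{2}+1$ accounting for a degree-$2$ vertex in a triangle all match the paper. However, the proposal has concrete gaps beyond mere lack of detail. First, you misread Rule $R_4$: the giver $x$ in $R_4$ is a \emph{support}, and a support is by definition a vertex of degree $3$ (Section~\ref{sect:def}), so degree-$k$ vertices never lose charge through $R_4$. Your ledger for them, ``$1$ from $R_g$ plus at most $\frac{1}{2}$ per degree-$2$ neighbor via $R_4$,'' is not a consequence of the rules, and it is not ``well within the surplus $k-6$'' when $k=7$ and the vertex has degree-$2$ neighbors, so even on your own terms this case is not closed. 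Symmetrically, you omit the one place where $R_4$ \emph{does} create an outflow: a degree-$3$ vertex can be a support up to twice and thus give $2\times\frac14$, and the reason it survives is that, by the definition of support, it is then incident to a face of degree at least $5$ or to a $4$-face whose opposite vertex has degree at least $4$, so it collects $\frac{3}{2}$ rather than $1$ from that face by $R_{1.2}$ or $R_{1.3}$. Your blanket claim that face inflow ``suffices'' for degrees $3$ to $5$ skips exactly this balance.

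Second, what you relegate to ``the hard part \ldots once these cases are closed'' is the proof. In particular: (i) the existence of both supports for a degree-$2$ vertex on a $4$-face whose opposite vertex has degree $3$ is not immediate from the absence of ($C_5$) and ($C_6$); the paper proves it as a separate statement (Lemma~\ref{lem:parent}) via a maximal sequence of $4$-faces whose failure to terminate would produce ($C_6$). (ii) The face side needs genuine case work: for $d(f)=5$ with all incident vertices of degree at least $4$ one needs ($C_7$) to cap the number of degree-$4$ vertices at three (your text instead invokes ($C_7$) on the vertex side, where the paper does not use it); for $d(f)=6$ one needs that ($C_3$) prevents $R_{1.4}$ from firing twice; for $d(f)\geq 7$ one needs the $\frac{5}{2}\lfloor d(f)/2\rfloor$ estimate. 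None of these is routine sign-counting, and without them the lemma is not established.
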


\begin{proof}
Assume for contradiction that $G$ is planar. Then it admits an embedding in the plane with no crossing edges. We attribute to each vertex $u$ a weight of $d(u)-6$, and to each face a weight of $2d(f)-6$, and apply discharging rules $R_1$, $R_2$, $R_3$, $R_4$ and $R_g$. 

Since Configurations \textbf{($C_1$)} and \textbf{($C_4$)} do not appear, the subgraph induced in $G$ by the edges incident to a vertex of a degree $2$ is a forest, both its neighbors are of degree $k$. Thus there are at least as many vertices of degree $k$ as there are vertices of degree $2$, so $R_g$ is valid: the common pot does not distribute more weight than it receives.

We first prove the following useful lemma:
\begin{lemma}\label{lem:parent}
In $G$, every vertex $v_0$ with $d(v_0)=2$ that belongs to a face $f_0=(u_1,v_1,u_2,v_0)$ with $d(v_1)=3$ admits a $(v_1,u_1)$-support and a $(v_1,u_2)$-support.
\end{lemma}
\begin{proof}
Assume by contradiction that $v_0$ has no $(v_1,u_1)$-support. Let $(f_0,f_1,\ldots,f_p)$ be a maximal sequence of distinct faces of degree $4$ where $f_i=(u_1,v_{i+1},x_i,v_i)$ (here $x_0=u_2$) and $d(v_{i+1})\leq 3$. Note that $d(v_{i+1})=3$ for every $0 \leq i \leq p$ since Configurations \textbf{($C_5$)} and \textbf{($C_6$)} do not appear. Let $f'$ be the other face to which the edge $(u_1,v_{p+1})$ belongs. We have $d(f')\geq 4$ since $G$ does not contain Configuration \textbf{($C_3$)}. By the contradiction assumption, we have $f'=(u_1,v_{p+1},x_{p+1},v_{p+2})$ with $d(v_{p+2})\leq 3$ as $v_{p+1}$ would otherwise be a $(v_1,u_1)$-support of $v_0$. Since $p$ was chosen to be maximal, we must have $f'=f_0$, a contradiction with the fact that Configuration \textbf{($C_6$)} does not appear in $G$.
\end{proof}

We show that all the vertices have a weight of at least $0$ in the end.\\

Let $u$ be a vertex of $G$. Since Configuration \textbf{($C_1$)} does not appear, $d(u) \geq 2$. We consider different cases depending on the value of $d(u)$.

\begin{enumerate}

\item $d(u)=2$.\\We consider two cases depending on whether $u$ is incident to a triangle.
\begin{enumerate}

\item \textit{Assume $u$ belongs to a triangle $(u,v,w)$}.\\Let $f_1$ and $f_2$ be the two faces adjacent to $(u,v,w)$, where $f_1$ is the face incident to $u$. Then, in order to avoid Configurations \textbf{($C_2$)} and \textbf{($C_3$)}, we must have $d(f_1)\geq 6$ and $d(f_2) \geq 5$. So, by Rules $R_{1.4}$, $R_3$ and $R_g$, $u$ receives $\frac{5}{2}$ from $f_1$, $\frac{1}{2}$ from $f_2$ and $1$ from the common pot. So $u$ has an initial weight of $-4$, gives nothing and receives $4$, so it has a non-negative final weight.

\item \textit{Otherwise, let $f_1$ and $f_2$ be the two faces to which $u$ belongs, with $d(f_1)$, $d(f_2) \geq 4$}.\\For each $f_i \in \{f_1,f_2\}$, we have three cases:
\begin{enumerate}
\item \textit{Either $f_i=(u,u_1,v,u_2)$, with $d(v)\leq 3$}.\\Then $d(v)=3$ since Configuration \textbf{($C_4$)} does not appear, and by Lemma~\ref{lem:parent}, $u$ has a $(v,u_1)$-support, and a $(v,u_2)$-support. Thus, by Rules $R_{1.1}$ and $R_4$, $u$ receives $1$ from $f_i$ and $\frac{1}{4}$ from each of its $(v,\_ )$-supports, so $u$ receives $\frac{3}{2}$ on the side of $f_i$.
\item \textit{Or $f_i=(u,u_1,v,u_2)$, with $d(v)\geq 4$}.\\Then, by Rule $R_{1.2}$, $u$ receives $\frac{3}{2}$ on the side of $f_i$.
\item \textit{Or $d(f_i)\geq 5$}.\\Then, by Rule $R_{1.3}$, $u$ receives $\frac{3}{2}$ on the side of $f_i$.
\end{enumerate}
So $u$ receives $2 \times \frac{3}{2}$ from $f_1$ and $f_2$, and it receives $1$ from the common pot: $u$ has an initial weight of $-4$, gives nothing and receives $4$, so it has a non-negative final weight.
\end{enumerate}

\item $d(u)=3$.\\We consider three cases depending on the faces $u$ is incident to.
\begin{enumerate}

\item \textit{Assume $u$ belongs to a triangle $(u,v,w)$}.\\Let $f_1$ and $f_2$ be the two other faces that are incident to $u$. To avoid Configurations \textbf{($C_2$)} and  \textbf{($C_3$)}, we must have $d(f_1)$, $d(f_2) \geq 5$. So $u$ gives nothing as it cannot be a support. By Rule $R_{1.3}$, $u$ receives $2 \times \frac{3}{2}$, has an initial weight of $-3$ and gives nothing, so it has a non-negative final weight.

\item \textit{Assume $u$ belongs to three faces $f_1=(u,u_1,v_1,u_2)$, $f_2=(u,u_2,v_2,u_3)$ and $f_3=(u,u_3,v_3,u_1)$, with $d(v_1)$, $d(v_2)$, $d(v_3) \leq 3$}.\\Then $u$ cannot be a support so it gives nothing. Vertex $u$ has an initial weight of $-3$, gives nothing, and receives $3 \times 1$ by Rule $R_{1.1}$, so it has a non-negative final weight.

\item \textit{Otherwise, $u$ belongs to a face $f_1$ such that either $d(f_1)\geq 5$ or $f_1=(u,u_1,v_1,u_2)$ with $d(v_1)\geq 4$}.\\Then $u$ has an initial weight of $-3$, gives at most $2 \times \frac{1}{4}$ by $R_4$ as a vertex cannot be support more than twice, and receives at least $\frac{3}{2}+2 \times 1$ by Rule $R_1$, so it has a non-negative final weight.
\end{enumerate}

\item $d(u)=4$.\\We consider two cases depending on whether $u$ is incident to a triangle.
\begin{enumerate}

\item \textit{Assume $u$ is incident to a triangle}.\\Then, since Configurations \textbf{($C_2$)} and  \textbf{($C_3$)} do not appear, $u$ is incident to two faces $f_1$ and $f_2$ such that $d(f_1)$, $d(f_2)\geq 5$. So $u$ has an initial weight of $-2$, gives nothing, and receives at least $2 \times 1$ by Rule $R_{2.2}$, so it has a non-negative final weight.

\item \textit{Otherwise, $u$ is incident to at least $4$ faces of degree at least $4$}.\\Then $u$ has an initial weight of $-2$, gives nothing, and receives at least $4 \times \frac{1}{2}$ by Rule $R_{2}$, so it has a non-negative final weight.
\end{enumerate}

\item $d(u)=5$.\\Since Configurations \textbf{($C_2$)} and  \textbf{($C_3$)} do not appear, $u$ is incident to (at least three and in particular) two faces $f_1$ and $f_2$ such that $d(f_1)$, $d(f_2)\geq 4$. So $u$ has an initial weight of $-1$, gives nothing, and receives at least $2 \times \frac{1}{2}$ by Rule $R_{2.1}$, so it has a non-negative final weight.

\item $6 \leq d(u) \leq k-1$.\\
Vertex $u$ has a non-negative initial weight, gives nothing, receives nothing, so it has a non-negative final weight.

\item $d(u)=k$.\\Then $u$ has an initial weight of at least $1$, gives $1$ to the common pot according to $R_g$ and no other rule applies, so it has a non-negative final weight.
\end{enumerate}

So all the vertices have a non-negative final weight after application of the discharging rules. Let us now prove that the same holds for the faces.\\

Let $f$ be a face of $G$. We consider different cases depending on the value of $d(f)$. Since Configuration \textbf{($C_3$)} does not appear, $f$ cannot give weight according to $R_3$ if $d(f)\leq 4$. Note also that since Configuration \textbf{($C_1$)} does not appear in $G$, $R_3$ can only be applied if the two endpoints of the edge are of degree $k$.

\begin{enumerate}
\item $d(f)=3$.\\Then $f$ has an initial weight of $0$, gives nothing, receives nothing, so it has a non-negative final weight.

\item $d(f)=4$.\\Assume $f=(u,v,w,x)$, where $u$ has the minimum degree. We consider two cases depending on $d(u)$.
\begin{enumerate}

\item $d(u)\leq 3$.\\Then, since Configuration \textbf{($C_1$)} does not appear, $d(v), d(x) \geq 6$, and $f$ gives nothing to them. Face $f$ has an initial weight of $2$. It gives at most $2 \times 1$ to $u$ and $w$ by Rule $R_{1.1}$, or at most $\frac{3}{2}+\frac{1}{2}$ to $u$ and $w$ by Rules $R_{1.2}$ and $R_{2.1}$ (depending on whether $d(w) \leq 3$). So it has a non-negative final weight.

\item $d(u)\geq 4$.\\Then $f$ has an initial weight of $2$, gives at most $4 \times \frac{1}{2}$ to $u$, $v$, $w$ and $x$ by Rule $R_{2.1}$, so it has a non-negative final weight.
\end{enumerate}

\item $d(f)=5$.\\We take $f=(u,v,w,x,y)$, where $u$ has minimum degree, and $d(w) \leq d(x)$. Face $f$ has an initial weight of $4$. We consider different cases depending on $d(u)$.
\begin{enumerate}

\item $d(u)\leq 3$.\\Then, since Configuration \textbf{($C_1$)} does not appear in $G$, $d(v), d(y) \geq 6$. We are in one of the following three cases.
\begin{enumerate}
\item $d(w)\leq 3$.\\Then, since Configuration \textbf{($C_1$)} does not appear in $G$, $d(x) \geq 6$. So, $f$ gives $\frac{3}{2}$ both to $u$ and $w$ by Rule $R_{1.3}$, and may give $\frac{1}{2}$ to a vertex of degree $2$ adjacent to both $x$ and $y$, by Rule $R_3$. So $f$ has an initial weight of $4$, gives at most $\frac{7}{2}$, and has a non-negative final weight. 
\item $4 \leq d(w)\leq 5$.\\Then $f$ gives $\frac{3}{2}$ to $u$ by Rule $R_{1.3}$, at most $1$ to $w$ by Rule $R_2$, and may give $1$ to $x$ by Rule $R_2$ or $\frac{1}{2}$ to a vertex of degree $2$ adjacent to both $x$ and $y$ by Rule $R_3$. So $f$ has an initial weight of $4$, gives at most $\frac{7}{2}$, and has a non-negative final weight. 
\item $d(w) \geq 6$.\\Then $f$ gives $\frac{3}{2}$ to $u$ by Rule $R_{1.3}$, and may give $\frac{1}{2}$ to a vertex of degree $2$ adjacent to both $v$ and $w$, both $w$ and $x$, or both $x$ and $y$, respectively, by Rule $R_3$. So $f$ has an initial weight of $4$, gives at most $\frac{3}{2}+3\times\frac{1}{2}=3$, and has a non-negative final weight. 
\end{enumerate}

\item $d(u)\geq 4$.\\Then, since Configuration \textbf{($C_7$)} does not appear in $G$, there are at most $3$ vertices of degree $4$ in $f$. So $f$ has an initial weight of $4$, gives at most $3\times 1+2\times\frac{1}{2}=4$, by Rules $R_{2.2}$, $R_{2.1}$ and $R_{3}$, and has a non-negative final weight.
\end{enumerate}

\item $d(f)=6$.\\Face $f$ has an initial weight of $6$, so it must not give more than $6$ away. Since Configuration \textbf{($C_3$)} does not appear in $G$, $R_{1.4}$ cannot apply more than once.
We consider four cases depending on the number $N$ of vertices of degree at most $3$ on the boundary of $f$. Note that $N \leq 3$ since Configuration \textbf{($C_1$)} does not appear in $G$. 
\begin{itemize}
\item If $N=0$, then by Rules $R_2$ and $R_3$, $f$ gives at most $d(f)\times 1\leq 6$ away.
\item If $N=1$, then since Configuration \textbf{($C_1$)} does not appear in $G$, $f$ is incident to at most two vertices of degree $4$. Thus, by Rule $R_1$, $f$ gives at most $\frac{5}{2}$ to its only neighbor of degree at most $3$, and by Rules $R_2$ and $R_3$, $f$ gives at most $4\times \frac{1}{2}$ extra weight. So $f$ gives at most $\frac{5}{2}+3 \leq 6$ away.
\item If $N=2$, then since Configuration \textbf{($C_1$)} does not appear in $G$, $f$ is incident to at most one neighbor of degree $4$. Thus, by Rule $R_1$ and since $R_{1.4}$ is applied at most once, $f$ gives at most $\frac{5}{2}+\frac{3}{2}$ to its two incident vertices of degree at most $3$, and by Rules $R_2$ and $R_3$, $f$ gives at most $1$ extra weight. So $f$ gives at most $4+1 \leq 6$ away.
\item Otherwise, $N=3$. Since Configuration \textbf{($C_1$)} does not appear in $G$, $f$ is incident to no vertex of degree $4$, and $R_3$ cannot be applied. Thus, by Rule $R_1$ and since $R_{1.4}$ is applied at most once, $f$ gives at most $\frac{5}{2}+2\times\frac{3}{2}$ to its three incident vertices of degree at most $3$, and neither $R_2$ nor $R_3$ apply. So $f$ gives at most $\frac{11}{2} \leq 6$ away.
\end{itemize}

\item $d(f)\geq 7$.\\In the worst case, $f$ gives $\frac{5}{2} \times \lfloor \frac{d(f)}{2} \rfloor$ by $R_{1.4}$, and it may give an additional $\frac{1}{2}$ by $R_3$ if $d(f)$ is odd, so $f$ has a non-negative final weight. It can easily be checked, as follows.
\begin{enumerate}
\item If $d(f)=7$.\\Then $2d(f)-6-(3\times\frac{5}{2}+\frac{1}{2})=0\geq 0$
\item If $d(f)=8$.\\Then $2d(f)-6-(\frac{d(f)}{2}\times\frac{5}{2})=\frac{3}{4}d(f)-6\geq 0$
\item Otherwise, $d(f)\geq 9$.\\Then $2d(f)-6-(\frac{d(f)}{2}\times\frac{5}{2}+\frac{1}{2})=\frac{3}{4}d(f)-\frac{13}{2}\geq 0$.
\end{enumerate} 
\end{enumerate}

Consequently, after application of the discharging rules, every vertex and every face of $G$ has a non-negative weight, $\sum_{v \in V} (d(v)-6)+ \sum_{f \in F} (2d(f)-6) \geq 0$. Therefore, $G$ is not planar.

\end{proof}

\subsection{Conclusion}

\noindent \emph{Proof of Theorem~\ref{thm:main}}

Let $\Gamma$ be a planar graph with no triangle adjacent to a cycle of length four, such that $\Delta(\Gamma) \geq 7$, and $\Gamma$ is not list edge $\Delta(\Gamma)$-choosable (resp. list total $(\Delta(\Gamma)+1)$-choosable). Graph $\Gamma$ has a subgraph $G$ that is a minimal graph such that $G$ is not list edge $\Delta(\Gamma)$-choosable (resp. list total $(\Delta(\Gamma)+1)$-choosable). We set $k=\Delta(\Gamma) \geq 7$. As $\Delta(G)\leq \Delta(\Gamma)=k$, by Lemma~\ref{lem:config}, graph $G$ cannot contain \textbf{($C_1$)} to \textbf{($C_{7}$)}. Lemma~\ref{lem:rules} implies that $G$ is not planar, thus $\Gamma$ is not planar, a contradiction.
\hfill $\Box$ \vspace{1em}\newline

Consequently, every planar graph with maximum degree $\Delta \geq 7$ and no triangle sharing an edge with a cycle of length four is $\Delta$-edge-choosable and $(\Delta+1)$-total-choosable. If the List Coloring Conjecture is true, then, by a theorem of Sanders and Zhao~\cite{sz01}, it should hold that every planar graph with maximum degree $\Delta \geq 7$ is $\Delta$-edge-choosable, with no condition on the cycles. However, even in the weaker setting where $(\Delta+1)$ colors are allowed, this remains open, and the case $\Delta \geq 8$ has only recently been solved by the first author~\cite{b13}.

\bibliographystyle{plain}

\begin{thebibliography}{12}

\bibitem{ah77}
K.~Appel and W.~Haken, \emph{Every map is four colorable: Part 1, Discharging}, Illinois J. Math. \textbf{21} (1977), pp.~422--490.

\bibitem{ahk77}
K.~Appel, W.~Haken and J.~Koch, \emph{Every map is four colorable: Part 2, Reducibility}, Illinois J. Math. \textbf{21} (1977), pp.~491--567.

\bibitem{b13}
Bonamy, M., \emph{Planar graphs with maximum degree D at least 8 are (D+1)-edge-choosable}, preprint, http://arxiv.org/abs/1303.4025 (2013).

\bibitem{bkw97}
O.V.~Borodin, A.~V.~Kostochka and D.~R.~Woodall, \emph{List Edge and List Total Colourings of Multigraphs}, Journal of Combinatorial Theory, Series B \textbf{71 (2)} (1997), pp.~184--204.

\bibitem{cks07}
R.~Cole, L.~Kowalik and R.~\v{S}krekovski, \emph{A generalization of Kotzig’s theorem and its
application}, SIAM J. Discrete Math \textbf{21} (2007), pp.~93--106.

\bibitem{hlc06}
J.~Hou, G.~Liu and J.~Cai, \emph{List edge and list total colorings of planar graphs without 4-cycles}, Theoretical Computer Science \textbf{369} (2006), pp.~250--255. 

\bibitem{lx11}
R.~Li and B.~Xu, \emph{Edge choosability and total choosability of planar graphs with no 3-cycles adjacent 4-cycles}, Discrete Mathematics \textbf{311} (2011), pp.~2158--2163.

\bibitem{lmw13}
Q.~Lu, Z.~Miao and Y.~Wang, \emph{Sufficient conditions for a planar graph to be list edge $\Delta$-colorable and list totally $(\Delta+1)$-colorable}, Discrete Mathematics \textbf{313} (2013), pp.~575--580.

\bibitem{sz01}
Sanders, D.~P., Y.~Zhao, \emph{Planar Graphs of Maximum Degree Seven are Class I}, Journal of Combinatorial Theory, Series B \textbf{83 (2)} (2001), pp.~201--212.

\bibitem{ww11}
B.~Wang, J.-L.~Wu, \emph{Total coloring of planar graphs with maximum degree $7$}, Information Processing Letters \textbf{111} (2011), pp.~1019--1021.

\end{thebibliography}

\end{document}